\newcommand{\pr}[1]{\mathrm{P}\! \left(#1\right)}
\newcommand{\mean}[1]{\mathrm{E}\!\left[#1\right]}
\newcommand{\set}[1]{\left\{{#1}\right\}}
\newcommand{\indic}[1]{1_{\{#1\}}}
\newtheorem{lemma}{Lemma}
\newtheorem{theorem}{Theorem}
\newtheorem{definition}{Definition}
\newtheorem{remark}{Remark}
\newcommand{\petteri}[1]{}
\newcommand{\eat}[1]{}
\begin{document}
\title{Performance of wireless network coding: motivating small encoding numbers}
\author{Petteri~Mannersalo,~\IEEEmembership{Member,~IEEE,}
        Georgios S. Paschos
        and~Lazaros~Gkatzikis,~\IEEEmembership{Student Member,~IEEE}%
\thanks{P. Mannersalo is with VTT, Finland. G.S. Paschos is with CERTH-ITI and L. Gkatzikis is with University of Thessaly, both in Greece.}}

\maketitle
\begin{abstract}
  This paper focuses on a particular transmission scheme called local network coding,
  which has been reported to provide significant performance gains in
  practical wireless networks. The performance of this scheme strongly
  depends on the network topology and thus on the locations of the
  wireless nodes.
  Also, it has been shown previously that finding the encoding strategy, which achieves maximum performance, requires complex calculations to be undertaken by the wireless node in real-time.
    
  Both deterministic and random point pattern are
  explored and using the Boolean connectivity model we provide upper
  bounds for the maximum coding number, i.e., the number of packets
  that can be combined such that the corresponding receivers are able to decode. For the
  models studied, this upper bound is of order of $\sqrt{N}$, where $N$
  denotes the (mean) number of neighbors. Moreover, achievable
  coding numbers are provided for grid-like networks. We also calculate the multiplicative constants that determine the gain in case of a small network. Building on
  the above results, we provide an analytic expression for the upper
  bound of the efficiency of local network coding.
  The conveyed message is that it is favorable to reduce computational complexity by relying only on small encoding numbers since the resulting expected throughput loss is negligible.
 \end{abstract}
\begin{keywords}
encoding number, network coding, random networks, wireless
\end{keywords}
\section{Introduction}

Network coding is an exciting new technique promising to improve the
limits of transferring information in wireless networks. The basic
idea is to combine packets that travel along similar paths in order to
achieve the multicast capacity of networks. The network
coding scheme called \textit{local network coding} was one of the first practical implementations able to
showcase throughput benefits, see COPE in \cite{cope}. The idea of local network coding is to
encode packets belonging to different flows whenever it is possible for these packets to be decoded at the next hop. The simplicity of the idea gave hopes for its efficient application in a real-world wireless router. In
the simple Alice-relay-Bob scenario, the relay XORs outgoing packets
while Alice and Bob use their own packets as keys for decoding. The
whole procedure offers a throughput improvement of 4/3 by eliminating
one unnecessary transmission.

Local network coding has been enhanced with the functionality of
\textit{opportunistic listen}. The wireless terminals are exposed to
information traversing the channel, and \cite{cope} proposed a smart
way to make the best of this inherent broadcast property of the
wireless channel. Particularly, each terminal operates in always-on
mode, overhearing constantly the channel and storing all overheard
packets. The reception of these packets is explicitly announced to an
intermediate node, called the relay, which makes the encoding
decisions. Finally, the relay can arbitrarily combine packets of different flows
as long as the recipients have the necessary keys for decoding. Using
the idea of opportunistic listen, an infinite wheel topology, where
everyone listens to everyone except from the intended receiver, can
benefit by an order of 2 in aggregate throughput by diminishing the
downlink into a single transmission, see \cite{proutiere}.
The wheel is a particular symmetric topology that is expected to appear rarely in real settings. Also, the above calculations take into account that all links have the same
transmission rates, thus it takes the same amount of time to deliver a
native (non-coded) packet or an encoded one. In addition, all possible flows are conveniently assumed to exist. This, however, is not expected to be a frequent setting in a real world network. A natural question reads: what is the expected
throughput gain in an arbitrary wireless ad hoc
network?

The maximum gain does not come at no cost either.
Deciding which packets to group together in an encoded packet is not a
trivial matter as explained in \cite{proutiere}, in ER \cite{er} and
in CLONE \cite{clone}. In the latter case, the medium is assumed to be
lossy, and the goal is to find the optimal pattern of retransmissions
in order to maximize throughput. In the first case, a queue-length
based algorithm is proposed for end-to-end encoding of symmetric
flows (i.e. flows that one's sender is the other's destination and the other way around.). All these decision-making problems are formulated as
follows. Denote $N(f_i)$ the set of nodes in need of a packet belonging to flow $f_i$ and
$H(f_i)$ the set of nodes having it. Then the encoded combination of two packets belonging to flows $f_i$ and $f_j$ can be
decoded successfully if and only if $N(f_i)\subseteq H(f_j)$ and $N(f_j)\subseteq
H(f_i)$. If this condition is true we draw an edge on the \textit{coding
  graph} with vertices all the possible packets. Then finding the
optimal encoding scheme is reduced to finding a minimum clique
partition of the coding graph, a commonly known NP-hard problem,
\cite{er}. Moreover, the same complexity appears when the relay node
makes scheduling decisions, i.e., selecting which packets to serve and
with what combinations. Work related to index coding has shown that this problem can be reduced to the boolean satifyability problem (SAT problem), \cite{chaundry:08}.

Thus a second question arises: what is the loss in throughput gain if
instead of searching over all possible encoded packet combinations, we
restrict our search in combinations of size at most $m$?
 In this paper
we are interested in showing that, for a real ad hoc wireless network,
opportunities for large encoding combinations rarely appear. To show this, we consider regular topologies like grids as well as random ones. We calculate the maximum encoding number in these scenarios in the mean sense and we consider small as well as large networks. To capture the behaviour of large (or dense) networks, we examine the scaling laws of maximum encoding number. 

Scaling laws are of extreme interest for the network community in general. Although they hold asymptotically, they provide valuable insights to the system designers. In this direction,
 the authors in \cite{WG09} study the wireless networks scaling capacity in a Gupta-Kumar way taking into account complex field NC. \cite{JGT07} also examines the use of NC for scaling capacity of wireless networks. They find that NC cannot improve the order of throughput, i.e. the $O\left(\frac{1}{\sqrt{n}}\right)$ law prevails.
\cite{AEMO07} discusses the issue of scaling NC gain in terms of delays while \cite{HR08} identifies the energy benefits of NC both for single multicast session as well as for multiple unicast sessions.
In \cite{GK08}, NC is used instead of power control and the benefits are characterized.
In a similar spirit, \cite{KV09} investigates the use of rate adaptation for wireless networks with intersession NC. Utilizing rate adaptation, it is possible to change the connectivity and increase or decrease the number of neighbors per node. They identify domains of throughput benefits for such case.

The most relevant work in the field is \cite{howmany}. The authors analyze the maximum coding number,
i.e., the maximum number of packets that can be encoded together such
that the corresponding receivers are able to decode. They show that
this number scales with the ratio $\frac{R}{\delta}$ where $\delta$ is
a region outside the communication region and inside the interference
region. Note however, that this work does not yield any geometric
property for the frequency of large combinations since it relies only on specific protocol characteristics. In networks with
small $\delta$, e.g., whenever a hard decoding rule is applied, there
is no bound for the maximum coding number.

In this paper we study the problem from a totally different point of
view, showing that there exist inherent geometric properties bounding
the maximum coding number below a number relative to the population or
density of nodes. Moreover, we apply the Boolean connectivity model
for which $\delta=0$, and thus the previous result does not provide
any bound at all.

We show that the upper bound of the maximum coding number is related
to a convexity property that any valid combination has.  We start by
considering a fixed separation distance network, like a square grid,
and show that in such networks, the maximum coding number is
$O(\sqrt{N})$ and $\Omega(\sqrt[4]{N})$,\footnote{ The symbol $O()$
  denotes that the function is bounded above by some linear function
  of the expression in the brackets whereas $\Omega()$ denotes that
  the expression is bounded from below.} where $N$ is the number of
nodes in transmission range of the relay. This implies that, even in networks with
canonically placed nodes, the maximum coding combination is
line-shaped even though the set of all nodes live in
2-dimensions. Next we study a random network where the locations of
the nodes follow a Poisson point process on the plane. In this case,
the maximum encoding number is found to be bounded in probability by
$O(\lambda^{\frac{1}{2}+\epsilon})$, where $\lambda$ is the node
density and $\epsilon>0$ arbitrary. Finally we consider the case where
the encoder searches for combinations of at most size $m<N$. We show
that the throughput efficiency loss in this case depends on the size
of the network, and for small networks the loss can be
negligible. This way we motivate heuristic algorithms that avoid the
high complexity arising in encoding selection. 
Through extensive simulations we show that all the derived results hold in general even for small networks.

\begin{figure}[tb]
\centering
\includegraphics[width=2.5in]{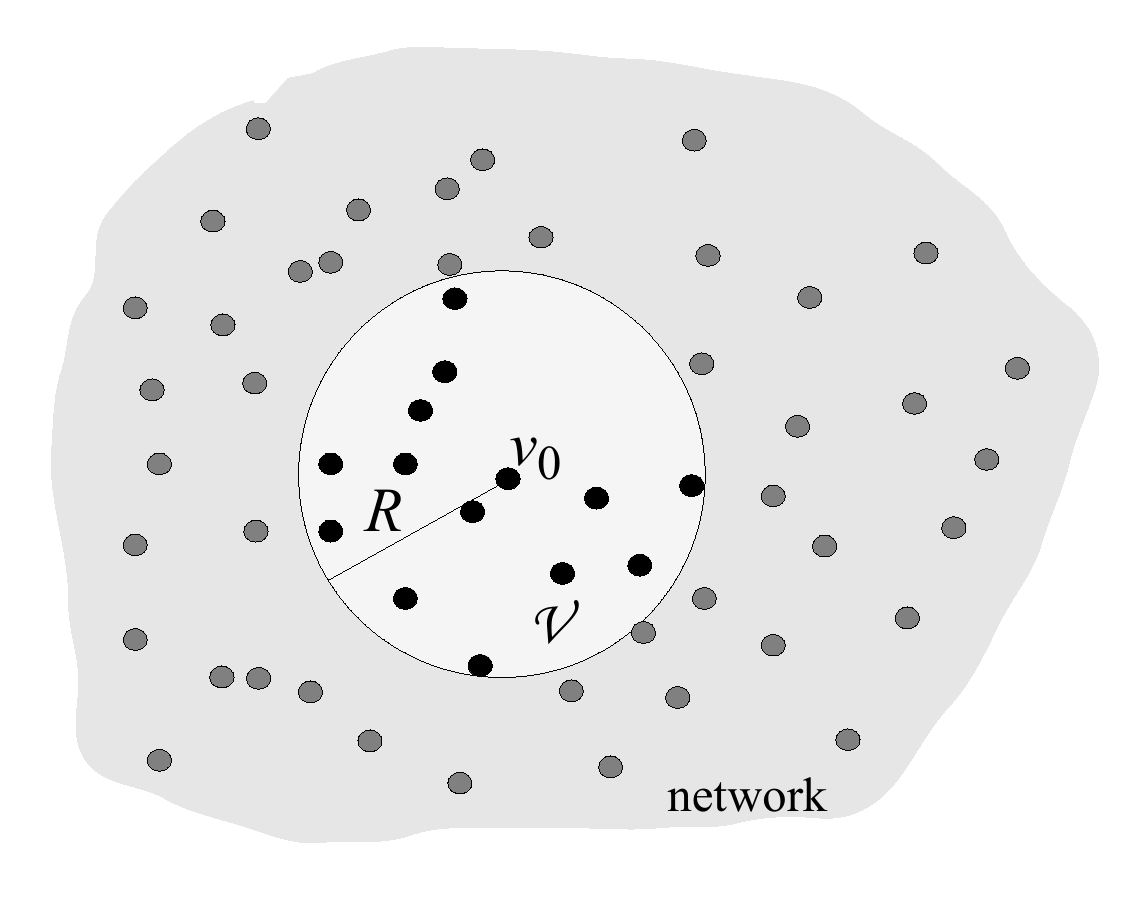} 
\caption{The set of valid nodes $\mathcal{V}$ is selected inside the disk of radius $R$ and origin the location of $v_0$.}
\label{fig:example}
\end{figure}
The paper is organized as follows.  In Section \ref{sec:model}, the
model is described and some basic properties are given.  In Section
\ref{sec:deterministic}, the main results for the case of grid-like
networks are derived.  Then in Section \ref{sec:stochastic} the case
of randomly positioned networks is considered.  A rate analysis is
provided in Section \ref{sec:efficiency} and simulation results are
shown in Section \ref{sec:numerical}.  The paper is concluded in
Section \ref{sec:conclusion}.

\section{Communication model}
\label{sec:model}

We assume a set of nodes $\mathcal{V}$, positioned on the plane.
Communications between these nodes are established via the Boolean
interference model (see, e.g., \cite{franceschetti}). In this model, a
link between two nodes $\{v_i,v_j\}$ is realized if and only if $\left|\mathbf{X}(v_i)-\mathbf{X}(v_j)\right|\leq
R$. In this case, we say that $v_i$ is connected with $v_j$ and vice versa. Note that the Boolean interference model is an undirected graph in the sense that only bi-directional links appear.

\subsection{Information flow}

Each node $v_i$ having degree $\deg\left(v_i\right)>1$, apart from
transmitting and receiving, relays information. In this process, it is
possible to avoid unnecessary transmissions by employing local network
coding. To simplify the analysis, we will consider only one cell,
i.e., we will focus on a given node $v_0$, and all its neighbors, and
calculate the network coding gain on the downlink of this node. A
similar result, then, holds for any such node serving as a relay. Thus we restrict
$\mathcal{V}$ to contain all neighbors of $v_0$, with $\mathcal{V}=\{v_1,v_2,\dots,v_N\}$ and $N\doteq
\left|\mathcal{V}\right|$ the number of nodes under consideration.
For a network determined by a Poisson point process with density
$\lambda$, we use correspondingly the mean number of points which is
given by $\mean{N}=\lambda \pi R^2$.
The main objective of this paper is to find how the \emph{maximum coding number} and the \emph{maximum network coding gain} scale with the number of
nodes. Also we will provide bounds for the scaling constants which are useful for determining the behaviour in small networks.

Apart from the number of neighbors, the gain analysis depends also on
the activated flows. In the simple Alice--relay--Bob topology, it is
possible that only the flow going from Alice to Bob is activated, in
which case the gain is zero. In this paper we are interested in determining
an upper bound for the efficiency loss when the relay is
constrained on combinations of size $m<N$ (e.g. if $m=2$ the system is constrained to pairwise XORing). For this reason, we consider the
maximum gain scenario. For each node designated as a relay, we assume
that all possible two--hop flows traversing this relay are
activated. This means that each node designated as a relay, has all
possible different packets from which to select an XOR combination to
send to the neighbors. Since not all of those combinations are valid,
finding the maximum valid combination that corresponds to the maximum coding number is a non-trivial task and will
be the goal of this paper. The resulting bound will help characterize the efficiency loss due to resorting to $m$-wise encoding. In real systems, some flows might not be active in which case the resulting efficiency loss from $m$-wise encoding will be even smaller.

To make this more precise, similar to \cite{er}, we define
source-destination pairs designating 2--hop flows that cross the
relay. Each flow $f\in\mathcal{F}$ has a source $S(f)$, a destination
$D(f)$, a set of nodes having it $H(f)\subset \mathcal{V}$ (either by
overhearing or ownership) and a set of nodes needing it $N(f)\subset
\mathcal{V}$. We write $\subset$ because at least one node, the destination $D(f)$ or the source $S(f)$, is not part of $H(f)$ and
$N(f)$ correspondingly. Two flows $f_1, f_2$ are called symmetric when they satisfy the property $S(f_1)=D(f_2)$ and $D(f_1)=S(f_2)$.

\subsection{Constraints}

Here we summarize the previous subsection in the form of
constraints. We will focus on network coding opportunities appearing
in the aforementioned arbitrary network around the relay
$v_0$. 
\eat{Without loss of generality, designate $\mathbf{X}\left(v_0\right)$ as the origin.}

\petteri{iff is not used in definitions}

\begin{definition}
  \emph{(valid node):} A node $v_i\in \mathcal{V}$ is a \textit{valid node} if
  $\left|\mathbf{X}(v_i)-\mathbf{X}(v_0)\right|\leq R$.
\end{definition}

\begin{definition}
  \emph{(valid flow):} A flow $f\in \mathcal{F}$ is a \textit{valid
    flow} if $S(f)$ and $D(f)$ are valid nodes not neighboring with
  each other, i.e., $\left|\mathbf{X}(S(f))-\mathbf{X}(D(f))\right|>
  R$.
\end{definition}

\begin{definition}
  \emph{(valid combination):} A subet of flows $\mathcal{C}\subseteq \mathcal{F}$ with $\mathcal{C}=\{f_1,f_2,
  \dots, f_C \}$, where $C=\left|\mathcal{C}\right|$, is a \textit{valid combination} if
\begin{itemize}
\item each flow $f_i\in\mathcal{C}$ is a valid flow,
\item every pair of flows $f_i,f_j\in \mathcal{C}, f_i \neq f_j$ satisfies $N(f_i)\subseteq H(f_j)$ and
  $N(f_j)\subseteq H(f_i)$ or equivalently, $S(f_i)$ is connected with
  $D(f_j)$ while $S(f_j)$ is connected with $D(f_i)$.
\end{itemize}
\end{definition}

We define the maximum coding number $C_{\max}$ as the greatest cardinality among all valid sets $\mathcal{C}$. If the positions of the network are random, $C_{\max}$ is evidently a random variable.

Note that we could
impose additional constraints. For example, if a flow can be routed
more efficiently by a node other than $v_0$, then this flow should be excluded
from the set of valid flows. This would restrict further the set of
valid combinations and thus by omitting this constraint we derive an upper bound for $C_{\max}$.

Next, we state some fundamental properties of the valid combinations\eat{some of which 
arise due to the bidirectional properties of the connectivity model}. For
each flow $f$ belonging to a valid combination $\mathcal{C}$ we have
\begin{itemize}
	\item $D(f)\in N(f)$,
	\item $D(f) \in H(j)$ for all $j\in \mathcal{C}\setminus \{f\}$,
\end{itemize}
which leads us to the following properties.

\begin{remark}
  The destination node of $f_i\in \mathcal{C}$ is different from the
  destination node of any other flow $f_j\in \mathcal{C}\setminus
  \{f_i\}$.
\end{remark}

\begin{remark}
  The source node of $f_i\in \mathcal{C}$ is different from the source
  node of any other flow $f_j\in \mathcal{C}\setminus \{f_i\}$.
\end{remark}

Next, we provide a result on the topology for a valid combination. Let $\mathcal{X}_{\mathcal{C}}$ represent the set of locations of all nodes being the source or destination of a flow belonging to a combination $\mathcal{C}$.

\petteri{Changed the notation and modified the proof to hold arbitrary combination}

\begin{lemma}
  Any valid combination of size 3 or larger corresponds to a convex polygon (the polygon is formed using the set $\mathcal{X}_{\mathcal{C}}$ as edges).
  \label{lemma:convex}
\end{lemma}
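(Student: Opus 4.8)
The plan is a proof by contradiction. Suppose $\mathcal{C}=\{f_1,\dots,f_C\}$ with $C\ge 3$ is a valid combination but some point of $\mathcal{X}_{\mathcal{C}}$ is not a vertex of $\operatorname{conv}(\mathcal{X}_{\mathcal{C}})$, i.e.\ it lies in the convex hull of the other points of $\mathcal{X}_{\mathcal{C}}$. Write $s_i:=\mathbf{X}(S(f_i))$, $d_i:=\mathbf{X}(D(f_i))$ and put the relay at the origin $o:=\mathbf{X}(v_0)$. The facts available from the definitions are: (a)~$|s_i-d_i|>R$ for every $i$; (b)~$|s_i-d_j|\le R$ for all $i\neq j$ (the two cross-connectivity conditions of a valid combination); (c)~every point of $\mathcal{X}_{\mathcal{C}}$ lies in the closed disk $\bar B(o,R)$. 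The goal is to contradict (a) or (b).

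First I would record a symmetry that halves the case analysis: reversing all flows of $\mathcal{C}$ (swapping each source with its destination) gives another valid combination, since (a) is symmetric in $s_i,d_i$ and the pair of requirements $S(f_i)\!\leftrightarrow\! D(f_j)$, $S(f_j)\!\leftrightarrow\! D(f_i)$ in (b) is invariant under the swap, and $\mathcal{X}_{\mathcal{C}}$ is unchanged; so I may assume the offending point is a destination $d_1$, with $d_1\in\operatorname{conv}(\mathcal{X}_{\mathcal{C}}\setminus\{d_1\})$. Observe that $\operatorname{conv}\bigl(\{s_1\}\cup\{d_j:j\neq 1\}\bigr)$ lies in $\bar B(s_1,R)$ by (b), whereas $d_1\notin\bar B(s_1,R)$ by (a); hence, by Carath\'eodory in the plane, $d_1$ lies inside a triangle on three other points of $\mathcal{X}_{\mathcal{C}}$, at least one of which is a source $s_a$ with $a\neq 1$. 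The main tool for the contradiction is a supporting-line argument: the line through $d_1$ orthogonal to $d_1-s_1$ ought to support $\operatorname{conv}(\mathcal{X}_{\mathcal{C}})$ on the side away from $s_1$, which would make $d_1$ a vertex; equivalently one wants $\langle q-d_1,\,d_1-s_1\rangle\le 0$ for all $q\in\mathcal{X}_{\mathcal{C}}\setminus\{d_1\}$. For $q=s_1$ this is $-|d_1-s_1|^2\le 0$, and for a destination $q=d_j$ it follows from $\langle d_j-s_1,\,d_1-s_1\rangle\le|d_j-s_1|\,|d_1-s_1|\le R|d_1-s_1|<|d_1-s_1|^2$ via (b) then (a).

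The decisive case is $q=s_a$ with $a\neq 1$, where the inequality says the angle $\angle s_a d_1 s_1$ is non-obtuse; this is the step I expect to be the main obstacle, because Cauchy--Schwarz alone is no longer enough. To close it one must use (c): $s_1,d_1,s_a,d_a$ all lie in a common disk of radius $R$, yet $|s_1-d_1|>R$ and $|s_a-d_a|>R$, so each of those two pairs is nearly antipodal in that disk, while $|s_a-d_1|\le R$ and $|s_1-d_a|\le R$; the claim is that this configuration cannot place $d_1$ in the wedge opposite to both $s_1$ and $s_a$. Most likely the argument has to be run more globally along the Carath\'eodory triangle: iterating the consequence of convexity of $x\mapsto|x-s_1|$ and $x\mapsto|x-d_1|$ on its edges (``a far source forces a far destination and conversely''), one pins down the types of the three witnesses and the locations of their associated far pairs inside $\bar B(o,R)$ until one of the distances required to be $\le R$ in (b) is forced above $R$ — the contradiction.
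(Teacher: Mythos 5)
There is a genuine gap: your proof never closes the case you yourself flag as decisive. The supporting-line argument at $d_1$ (orthogonal to $d_1-s_1$) correctly handles $q=s_1$ and every destination $q=d_j$, because those points are known to lie in $\bar B(s_1,R)$ while $d_1$ does not; but for $q=s_a$ with $a\neq 1$ there is no pairwise condition bounding $|s_a-s_1|$, so Cauchy--Schwarz gives nothing, and your final paragraph (``most likely the argument has to be run more globally \dots until one of the distances is forced above $R$'') is a statement of intent rather than an argument. Your Carath\'eodory observation even shows that this bad case is unavoidable: any convex representation of $d_1$ must put positive weight on some $s_a$. So as written the proposal does not prove the lemma.

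It is worth contrasting this with the paper's route, which is both shorter and dual to yours: instead of separating the offending point from the rest by a line, the paper writes the interior point as a convex combination $X_j=\sum_{i\neq j}\alpha_i X_i$ and applies the triangle inequality to the distance to its far partner, $|X_j-X_{j^*}|\leq\sum_{i\neq j}\alpha_i|X_i-X_{j^*}|\leq R$, contradicting $|X_j-X_{j^*}|>R$. The single geometric input is that \emph{every} other node of the combination lies within $R$ of the partner $v_{j^*}$ --- exactly the containment that fails to be obvious for your ``other sources.'' The paper asserts this containment outright; it becomes immediate once one restricts to combinations of symmetric flows (every node is then both a source and a destination, so it must reach $v_{j^*}$ to decode), which is the reduction the paper's second lemma provides and which costs nothing for upper-bounding $C_{\max}$. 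So you have, in effect, rediscovered the one delicate point of the proof, but neither the symmetrization that dissolves it nor a direct argument for the non-obtuse-angle claim appears in your write-up.
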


\begin{proof}
  Consider a valid combination defined by flows 
  \[
  \mathcal{C}=\set{f_i, i=1,\ldots, C},
  \] 
  where $C\geq 3$. Consider also the set of nodes that are sources and/or destinations in $\mathcal{C}$
  \[
  \mathcal{V}_{\mathcal{C}}=\cup_{i} S(f_i) \cup_{i} D(f_i)
  \]
  and the induced set of locations $\mathcal{X}_{\mathcal{C}}$ such that we have a bijective mapping for each element $v_j\in \mathcal{V}_{\mathcal{C}}$ with an element $\mathbf{X}(v_j) \in \mathcal{X}_{\mathcal{C}}$.
  
  Assume that there is a node $v_j\in
  \mathcal{V}_{\mathcal{C}}$ which is an interior point of the convex
  hull\footnote{The convex hull of points $\mathcal{X}$ is the minimal convex
    set containing $\mathcal{X}$.} of $\mathcal{X}_{\mathcal{C}}$.  Thus its location
  $X_j=\mathbf{X}(v_j)$ can be written as $X_j=\sum_{i\not=j} \alpha_i
  X_i$ where $\sum_{i\not=j} \alpha_i=1$ and $\alpha_i\geq 0$ for all
  $i$.

  On the other hand, there is a unique $v_{j^*}\in V_{\mathcal{C}}$,
  which is the communicating pair (source or destination) of $v_j$ in at least one flow, so that
  \begin{equation}
  |X_j-X_{j^*}|>R. 
  \label{eq:proximity}
  \end{equation}
   
  All the other nodes (destinations or sources) in $\mathcal{V}_{\mathcal{C}}$ should be able to reach the node $v_{j^*}$
  directly. Thus,
  \[
  |X_j-X_{j^*}| \leq \sum_{i\not=j}\alpha_i |X_i-X_{j^*}| \leq
  \sum_{i\not=j}\alpha_i R \leq R,
  \]
  which is a contradiction to (\ref{eq:proximity}). Consequently the node $v_j$, as well as all other nodes
  of the combination, necessarily lie on the perimeter of the
  convex hull. Thus, the nodes of a valid combination are the vertices
  of a convex polygon.
\end{proof}

When the set of sources is identical to the set of destinations, the combination consists of symmetric flows only and $C=2(\left|\mathcal{V'}\right|-1)$,
$\mathcal{V'}\subseteq \mathcal{V}$. 

In order to calculate an upper
bound of the network coding combination size, it is enough to resort
to the case of symmetric flows.

\begin{lemma}
  For any valid combination there exists at least one combination of
  the same or larger size that contains only symmetric flows.
\end{lemma}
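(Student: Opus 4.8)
The plan is to build the symmetric combination by \emph{reflecting} $\mathcal{C}$. Write $\mathcal{C}=\{f_1,\dots,f_C\}$, and for each $i$ let $\bar f_i$ be the flow with $S(\bar f_i)=D(f_i)$ and $D(\bar f_i)=S(f_i)$. The first step is to check that $\mathcal{C}^{\mathrm{rev}}=\{\bar f_1,\dots,\bar f_C\}$ is again a valid combination: $\bar f_i$ has the same unordered pair of endpoints as $f_i$, hence is a valid flow, and for $i\neq j$ the pair requirement ``$S(\bar f_i)$ connected with $D(\bar f_j)$ and $S(\bar f_j)$ connected with $D(\bar f_i)$'' is verbatim ``$D(f_i)$ connected with $S(f_j)$ and $D(f_j)$ connected with $S(f_i)$'', which is exactly the pair requirement satisfied by $f_i,f_j$ in $\mathcal{C}$. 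So reversal preserves validity.

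The natural candidate is $\mathcal{C}'=\mathcal{C}\cup\mathcal{C}^{\mathrm{rev}}$: it consists only of symmetric flows by construction and $|\mathcal{C}'|\ge|\mathcal{C}|$, strictly unless $\mathcal{C}$ already contained every reverse. What remains is that $\mathcal{C}'$ be a valid combination, and here two things can fail. If $S(f_i)=D(f_j)$ for some $i,j$, then that node becomes the source of two flows of $\mathcal{C}'$, against Remark~2; and the pair requirement between $f_i$ and $\bar f_j$ ($i\neq j$) reads ``$S(f_i)$ connected with $S(f_j)$ and $D(f_i)$ connected with $D(f_j)$'', that is, \emph{all} sources must be pairwise connected and \emph{all} destinations pairwise connected. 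So the lemma reduces to the following claim: every valid combination can be replaced, without shrinking, by a valid combination whose set of sources equals its set of destinations and all of whose sources are pairwise connected.

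To prove this I would study the directed graph $G$ on $\mathcal{V}_{\mathcal{C}}$ with an arc $S(f_i)\to D(f_i)$ for each $i$. By Remarks~1 and~2 every vertex has in- and out-degree at most one, so $G$ is a disjoint union of simple directed paths and simple directed cycles, and the constraints force the components to be short: in a path $v_1\to v_2\to v_3\to v_4$, flow-validity makes $v_2$ and $v_3$ non-neighbors, whereas applying the pair requirement to the flows $(v_1,v_2)$ and $(v_3,v_4)$ forces $v_2$ and $v_3$ to be neighbors --- a contradiction; the same computation, with indices read modulo the length, rules out directed cycles of length $\ge 3$. Hence every component is a $2$-cycle, which is already a symmetric pair of flows, or a path with one or two arcs. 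I would then replace each two-arc path $a\to v\to b$ by the symmetric pair on $\{a,v\}$ (two flows for two) and each one-arc path $a\to b$ by the symmetric pair on $\{a,b\}$ (two flows for one --- the slack that lets the count come out right). After all replacements the set of sources equals the set of destinations and the total size has not decreased.

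The step I expect to be the real obstacle is verifying that these local surgeries stay compatible with the rest of $\mathcal{C}'$ --- equivalently, establishing the ``all sources pairwise connected'' property. When a one-arc component $a\to b$ is doubled, the former source $a$ becomes a destination, so the pair requirement now demands $a$ be connected to every other source, while the original constraints only guarantee $a$ is connected to every other \emph{destination}. Closing this gap is where I would bring in Lemma~\ref{lemma:convex}: for $C\ge 3$ all sources and destinations are vertices of the convex polygon on $\mathcal{X}_{\mathcal{C}}$, and every flow joins two of these vertices at distance $>R$, so I would try to show, from the convex geometry of the hull, that a vertex which is a source cannot lie at distance $>R$ from another source, making the doubled combination valid; the cases $C\le 2$, where Lemma~\ref{lemma:convex} does not apply, would be checked directly, and where the geometric estimate needs extra leverage one can fall back on the optional ``more efficient route'' constraint noted after the definitions to prune the offending flows. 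Pinning down this geometric estimate is the technical heart of the argument.
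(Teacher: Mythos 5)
Your proposal does not reach a proof: you correctly reduce the problem to showing that, after the local surgeries, every two of the resulting symmetric pairs are mutually compatible --- equivalently, that all sources are pairwise connected and all destinations are pairwise connected --- but you then only sketch a hoped-for geometric estimate from Lemma~\ref{lemma:convex} and explicitly leave it open. That estimate is false, so the route you outline cannot be completed as described. Take $R=1$ and the four nodes $A=(-0.6,\,0.01)$, $B=(0.6,\,0.01)$, $C=(0.6,\,-0.01)$, $D=(-0.6,\,-0.01)$: the two flows $A\to B$ and $C\to D$ form a valid combination (each flow has length $1.2>R$, and $A\sim D$, $C\sim B$), the four points are in convex position, yet the two sources $A$ and $C$ are at distance greater than $R$, as are the two destinations $B$ and $D$. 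Your surgery doubles both one-arc components into the pairs $\{A,B\}$ and $\{C,D\}$, which do \emph{not} form a valid combination. (The lemma itself survives here only because a \emph{single} symmetric pair, say $A\leftrightarrow B$, already has size $2=C$; your procedure, which insists on keeping every component, outputs an invalid set.) You flag $C\le 2$ as a case to ``check directly,'' but the failure is structural: whenever a combination contains two one-arc components whose sources are far apart, doubling both is impossible, and you give no mechanism for deciding which components to keep or how to re-pair nodes. That missing mechanism is the entire content of the lemma, not a technicality.

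For comparison, the paper's own proof is precisely the naive reflection you consider and correctly distrust: it adds the reverse $f_i'$ of every non-symmetric flow and asserts that $S(f_i')=D(f_i)$ is connected to every other destination because this is ``required for the decoding of the original flows.'' As your own analysis shows, the original validity conditions only supply source-to-destination connectivity across flows, not destination-to-destination or source-to-source, so the paper's justification is incomplete at exactly the point you identified. Your structural observations --- in- and out-degrees at most one, no directed paths or cycles with three or more arcs, hence only $2$-cycles, single arcs and two-arc paths --- are correct and sharpen the picture beyond what the paper offers; but without closing the cross-connectivity gap, your argument does not prove the lemma.
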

\begin{proof}
  We will show that for any flow we can add the symmetric one without invalidating the combination as long as it is not already counted.
  
  In a bipartite graph with all the nodes $\mathcal{V}$ on one side
  and the destinations of $\mathcal{C}$ on the other, consider a
  directional link $\ell_f$, between the source of flow $f$ and its
  destination, for each $f\in \mathcal{C}$. Note now that the nodes
  having out-degree one, i.e., the active sources in $\mathcal{C}$, may
  or may not be identical to one of the destination nodes. We can make
  a partition of the set of active sources by assigning those with
  the above property to the set $\mathcal{T}_{\textsl{sym}}$ and the rest
  to the complementary set $\overline{\mathcal{T}_{\textsl{sym}}}$. If
  $\overline{\mathcal{T}_{\textsl{sym}}}=\emptyset$, then the Lemma is
  proved since $\mathcal{C}$ is a valid combination with symmetric
  flows only. If not, then we can create a new combination $\mathcal{C}'$
  which has more flows than the original one using the following
  process. For each transmitter in
  $\overline{\mathcal{T}_{\textsl{sym}}}$, say $S(f_i)$ the transmitter of flow
  $f_i$, add one extra flow $f_i'$ with $S(f_i')=D(f_i)$ and
  $D(f_i')=S(f_i)$. This flow does not belong to $\mathcal{C}$
  (because $S(f_i)\in \overline{\mathcal{T}_{\textsl{sym}}}$) and it
  does not invalidate the combination due to the bidirectional
  properties of the model. Note that $f_i'$ is a valid flow because
  $S(f_i')$ cannot be connected to $D(f_i')$ due to validity of $f_i$. Note also that $S(f_i')$ is connected to $D(f)$ for all $f\in
  \mathcal{C}$ since this is again required for the decoding of the
  original flows. Thus, for any flow we can add the symmetric one
  without invalidating the combination.
\end{proof}

\begin{remark}
  If a valid combination consists of symmetric flows only, its size must be even.
\end{remark}

In graph theory terms, a valid combination with symmetric flows can be
thought of as a graph created by a clique of $C+1$ nodes, minus a
matching with $\frac{C}{2}$ edges, with all symmetric flows defined by
this matching activated. This graph is called in \cite{cope}
\textit{wheel topology}.


\section{Analysis in grid-like topologies}
\label{sec:deterministic}

In this section we focus on positioning the nodes on a 
grid.  Grid topologies often offer an insightful first step approach
towards the random positioning behaviour. Also, the investigation of grids answers the question whether it is possible to achieve high NC gain by arranging the locations of the nodes.

We therefore assume a
network with the additional property
$\left|\mathbf{X}(v_i)-\mathbf{X}(v_j)\right|\geq d$, for any pair of
nodes $v_i, v_j \in \mathcal{V}$. This condition pertains to regular
grids such as the square, the triangular and the hexagonal grid as
well as other grids with non-uniform geometry. We impose nevertheless
the property that the node density is the same over all cells and thus
the geometry should be somehow homogeneous.  The number of nodes
inside a disk or radius $R$ will be $N=O\left((\frac{R}{d})^2\right)$
for these networks and the corresponding node density
$\lambda^{\mathrm{grid}}=O\left((\frac{1}{d})^2\right)$.

\begin{theorem}
  \emph{(Upper bound)} The maximum coding number in
  fixed-separation networks is $O\left(\sqrt{N}\right)$ where $N$ is
  the number of nodes or equivalently
  $O\left(\sqrt{\lambda^{\mathrm{grid}}}\right)$.
\end{theorem}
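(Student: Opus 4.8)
The plan is to use Lemma~\ref{lemma:convex} to turn the problem into a purely geometric one: bounding the number of vertices of a convex polygon whose vertices are $d$-separated points lying inside a disk of radius $R$, which is controlled by a one-line perimeter estimate. If $C_{\max}\le 2$ there is nothing to prove, so fix a valid combination $\mathcal{C}$ with $|\mathcal{C}|=C_{\max}\ge 3$ and let $\mathcal{X}_{\mathcal{C}}$ be the set of locations of the sources and destinations of $\mathcal{C}$, of cardinality $n$. Since distinct flows of a valid combination have distinct destinations, the map $f\mapsto D(f)$ is injective and every destination is one of the nodes counted by $\mathcal{X}_{\mathcal{C}}$; hence $C_{\max}\le n$, and it suffices to prove $n=O(R/d)$.

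By Lemma~\ref{lemma:convex} the $n$ points of $\mathcal{X}_{\mathcal{C}}$ are exactly the vertices of a convex polygon $P$. Each such point is a valid node and therefore lies in the disk $B$ of radius $R$ centred at $\mathbf{X}(v_0)$; as $P\subseteq B$ with both convex, the perimeter of $P$ is at most that of $B$, i.e.\ $\mathrm{per}(P)\le 2\pi R$ (monotonicity of perimeter under inclusion of planar convex bodies, immediate from Cauchy's formula $\mathrm{per}(P)=\int_0^{\pi} w_P(\theta)\D\theta$ together with $w_P(\theta)\le w_B(\theta)=2R$). On the other hand, consecutive vertices of $P$ are distinct nodes of $\mathcal{V}$, so the fixed-separation hypothesis forces each of the $n$ edges of $P$ to have length at least $d$; summing the edge lengths yields $n d\le\mathrm{per}(P)\le 2\pi R$, hence $n\le 2\pi R/d$. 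I would stress that convexity is doing the real work here: a naive packing bound (disjoint disks of radius $d/2$ around the points, all inside a disk of radius $R+d/2$) gives only $n=O\big((R/d)^2\big)$, i.e.\ the trivial estimate $C_{\max}=O(N)$.

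Combining the two steps, $C_{\max}\le 2\pi R/d$. For the homogeneous fixed-separation networks in question the disk of radius $R$ contains $N=\Theta\big((R/d)^2\big)$ nodes, equivalently $\lambda^{\mathrm{grid}}=\Theta(1/d^2)$, so that $R/d=\Theta(\sqrt N)=\Theta(\sqrt{\lambda^{\mathrm{grid}}})$ and therefore $C_{\max}=O(\sqrt N)=O(\sqrt{\lambda^{\mathrm{grid}}})$, as claimed. The only subtle point worth re-checking is that \emph{every} source/destination node — not just most of them — is a genuine vertex of $P$, and correspondingly that $P$ is non-degenerate for $C\ge 3$; both are exactly what Lemma~\ref{lemma:convex} provides, so beyond that the argument is pure bookkeeping.
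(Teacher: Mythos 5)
Your argument is correct and follows essentially the same route as the paper: Lemma~\ref{lemma:convex} gives convexity, the perimeter of the polygon is bounded by $2\pi R$, the $d$-separation gives at most $2\pi R/d$ vertices, and $N=\Theta\bigl((R/d)^2\bigr)$ converts this to $O(\sqrt{N})$. Your additional care in justifying the perimeter monotonicity and in passing from the number of polygon vertices to the number of flows via the injectivity of $f\mapsto D(f)$ only tightens details the paper leaves implicit.
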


\begin{proof}
  From Lemma \ref{lemma:convex} we know that the nodes belonging to
  the maximum combination form a convex polygon. Any such polygon
  fitting inside the disk of radius $R$ must have perimeter smaller
  than $2\pi R$. Since the nodes on the perimeter should be at least
  $d$ away from each other, we conclude that the maximum coding number is
\[
C_{\max}<\frac{2\pi R}{d}.
\]
This combined with $N=O\left(\frac{R}{d}^2\right)$ or respectively $\lambda^{\mathrm{grid}}=O\left((\frac{1}{d})^2\right)$ yields the result.
\end{proof}

A particular case of the above bound is the square grid. The number of
nodes inside the disk is $N=\pi \left(\frac{R}{d}\right)^2
+e\left(\frac{R}{d}\right)$ where $e\left(\frac{R}{d}\right)\leq
2\sqrt{2}\pi \frac{R}{d}$ is an error decreasing linearly with
$d$. Thus we obtain an upper bound
\[
C_{\max}^{\textsl{square}}<\sqrt{4\pi N}.
\]

So far we have shown that any network with fixed separation distance
$d$ and uniform density, will have maximum coding number of
$O(\sqrt{N})$, where $N$ is the number of nodes connected to the
relay. In particular, the constant can be determined for any given grid and for the square grid is $2\sqrt{\pi}$. The simulations show that the actual maximum encoding number is approximately half of that calculated above. The reason for that is basically that the valid polygon is always smaller than the disk of radius $R$ and often close to the size of a disk of radius $\frac{R}{2}$.

It is interesting to bound the achievable maximum coding number
from below as well. To obtain intuition about this bound we start
with a non-homogeneous topology, the \textit{cyclic grid}. We
construct concentric cyclic groups of radius $R_i=id$, $i=0,1,\dots,\lfloor \frac{R}{d} \rfloor $
that fall inside the disk of radius $R$. Each cyclic group has as many
nodes as possible such that the fixed separation distance condition
is not violated. Such a topology exhibits different behavior depending on
the selected origin (it is not homogeneous), nevertheless it helps
identify a particular behavior of the achievable maximum coding number. The
cyclic group at $R_i$ has $\left\lfloor
  \frac{2\pi}{\arccos\left(1-\frac{1}{2i^2}\right)}\right\rfloor$
nodes. Thus the grid of radius $R$ will have
\[
N=1+\sum_{i=1}^{\left\lfloor \frac{R}{d}\right\rfloor} \left\lfloor
  \frac{2\pi}{\arccos\left(1-\frac{1}{2i^2}\right)}\right\rfloor.
\]
A very good approximation is, 
\begin{equation}
\begin{split}
N&\approx 1+6\sum_{i=1}^{\left\lfloor \frac{R}{d}\right\rfloor}i\\
&=1+\left\lfloor \frac{R}{d}\right\rfloor+\left\lfloor \frac{R}{d}\right\rfloor^2\\
&\approx 3\frac{R^2}{d^2}.
\end{split}
\end{equation}

\begin{theorem}
\emph{(Lower bound)}
\label{th:lower}
In networks with nodes $d$ away from each other and cell radius $R$,
an achievable maximum coding number is
\begin{enumerate}
\item $C_{\max}^{\textsl{cyclic}}=\Omega(\sqrt{N})$ for cyclic grid with $\mathrm{rem}\left(
    R,2d\right) \rightarrow 0$,
\item $C_{\max}^{\textsl{cyclic}}=\Omega(\sqrt[4]{N})$ for cyclic grid with $\mathrm{rem}\left(
    R,2d\right) \rightarrow d$ and
\item $C_{\max}^{\textsl{square}}=\Omega(\sqrt[4]{N})$ for the square grid of $d$,
\end{enumerate}
where $\mathrm{rem}(x,y)$ is the remainder of the division $x/y$.
\end{theorem}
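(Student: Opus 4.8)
The plan is to prove all three parts by explicit construction: for the relevant topology I would exhibit a concrete valid combination whose size matches the stated bound. First I would invoke the symmetrization lemma to reduce to building \emph{wheels} --- it is enough to produce a finite set $\mathcal{X}$ of grid points, all within distance $R$ of the relay, on which the relation ``$|x-y|>R$'' is a perfect matching (every point of $\mathcal{X}$ is at distance $>R$ from exactly one other point of $\mathcal{X}$, and at distance $\le R$ from every remaining point). Activating the two symmetric flows of each matched pair yields a valid combination of size $C=|\mathcal{X}|$: each flow is valid because its endpoints are non-adjacent, and for distinct flows $f_i,f_j$ the pair $\{S(f_i),D(f_j)\}$ is never a matched pair, hence is connected, so the decoding condition holds. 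Therefore $C_{\max}\ge|\mathcal{X}|$ (such an $\mathcal X$ is moreover convex by Lemma~\ref{lemma:convex}, consistent with the circular arrangements used below). Everything reduces to making $|\mathcal{X}|$ as large as the geometry permits.

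The workhorse is a (near-)regular $2k$-gon of grid points placed on a circle of radius $r$ about the relay, with each vertex matched to its antipode. For a regular $2k$-gon inscribed in radius $r$, the inter-vertex distances are $\ell_j=2r\sin\!\big(j\pi/(2k)\big)$, $j=1,\dots,k$: the antipodal pairs have length $\ell_k=2r$, the next-longest diagonals have length $\ell_{k-1}=2r\cos\!\big(\pi/(2k)\big)=2r-\Theta(r/k^{2})$, and every remaining distance is smaller. Hence the polygon is a valid wheel precisely when $2r>R\ge \ell_{k-1}$, i.e.\ when $0<2r-R\le 2r\big(1-\cos(\pi/(2k))\big)=\Theta(r/k^{2})$; combined with the fact that consecutive vertices must be at least $d$ apart (so the circle carries only $O(r/d)$ usable points), this gives a wheel of size
\[
|\mathcal X| = 2k = \Theta\Bigl(\min\bigl\{\sqrt{r/(2r-R)},\ r/d\bigr\}\Bigr).
\]
Since any admissible radius satisfies $r>R/2$ and $\sqrt{r/(2r-R)}$ collapses to $\Theta(1)$ as soon as $r$ is bounded away from $R/2$, the whole point is to place $r$ as close to $R/2$ as the grid allows.

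For the cyclic grid the admissible radii are the multiples of $d$, so this becomes arithmetic. If $\mathrm{rem}(R,2d)\to 0$ then $R$ is within $o(d)$ (in fact I would need within $O(d^{2}/R)$) of a multiple of $2d$, so some cyclic group of radius $r\approx R/2$ has $2r-R=O(d^{2}/R)$; then $\sqrt{r/(2r-R)}=\Omega(R/d)$, the binding constraint becomes $k=\Theta(r/d)$, and using essentially all of the $\Theta(R/d)$ points of that (exactly regular) cyclic group, matched antipodally, gives $C_{\max}^{\textsl{cyclic}}=\Omega(R/d)=\Omega(\sqrt{N})$, which is part~(1). If instead $\mathrm{rem}(R,2d)\to d$, the cyclic group nearest to radius $R/2$ has $2r-R=\Theta(d)$, so the gap bound alone limits a wheel to $2k=\Theta(\sqrt{r/d})=\Theta(\sqrt{R/d})$; I would then take, as vertices, the cyclic-group points nearest the corners of an ideal regular $2k$-gon with $2k=\varepsilon\sqrt{R/d}$, the constant $\varepsilon$ small enough that the diagonal gap $\Theta(r/k^{2})=\Theta(d/\varepsilon^{2})$ comfortably exceeds the $O(d)$ deviation of the chosen points from exact regularity, yielding $C_{\max}^{\textsl{cyclic}}=\Omega(\sqrt{R/d})=\Omega(\sqrt[4]{N})$, which is part~(2). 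For the square grid there is no circle to sit on, so I would start from the ideal regular $2k$-gon inscribed in a circle of radius slightly larger than $R/2$, round each vertex to its nearest lattice point, and again take $2k=\varepsilon\sqrt{R/d}$ with $\varepsilon$ small enough that the gap $\Theta(d/\varepsilon^{2})$ dominates the $O(d)$ rounding error, keeping the ordering $\ell_k>R\ge\ell_{k-1}$ intact; this gives $C_{\max}^{\textsl{square}}=\Omega(\sqrt{R/d})=\Omega(\sqrt[4]{N})$, which is part~(3).

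The main obstacle is turning the ``$\Theta$''s above into honest inequalities, and it is concentrated in parts~(2) and~(3): once the $2k$ lattice points are fixed one must verify that the \emph{only} pairs exceeding $R$ are the $k$ intended antipodal ones, which means simultaneously controlling all the sub-dominant chord lengths, the $O(d)$ deviation of the chosen point set from exact regularity, and --- in the square case --- the rounding errors, and then producing for the stated $\mathrm{rem}$ regime a concrete circle with a clean count of its usable points. Part~(1) is the mildest, since one uses \emph{all} points of a single cyclic group (which are exactly equally spaced, so no deviation term appears); the delicate issues there are that this cyclic group must have \emph{even} size for the antipodal matching to be perfect, and a precise statement of how fast $\mathrm{rem}(R,2d)$ must go to $0$ relative to $d^{2}/R$. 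Beyond that, the argument is elementary trigonometry together with the two lemmas already established.
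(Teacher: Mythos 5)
Your proposal follows essentially the same route as the paper: reduce to symmetric wheels, place an antipodally matched, (near-)regular $2k$-gon on a circle of radius just above $R/2$, and size $k$ by requiring the second-longest chords $2r\cos(\pi/(2k))$ to stay below $R$ while consecutive points remain $d$-separated --- this is exactly the paper's cyclic-group construction for parts (1)--(2) and its ring-of-width-$\delta$ construction with alternating angles $\phi,\omega$ for part (3), with your $\min\{\sqrt{r/(2r-R)},\,r/d\}$ trade-off matching the paper's $\arccos$ formulas. If anything you are more explicit than the paper about the quantitative conditions it glosses over (that part (1) really needs $2r-R=O(d^2/R)$, the parity of the cyclic group, and the $O(d)$ rounding error in parts (2)--(3)), so the proposal is correct and the flagged verifications are exactly the ones the paper also leaves informal.
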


\begin{proof}

  \underline{For the cyclic grid when $\mathrm{rem}\left( R,2d\right)
    \rightarrow 0$:} By focusing on the cyclic group with
  $\inf\{i:R_i>\frac{R}{2}\}$, note that each node is
  $\frac{R}{2}+\epsilon$ from the center and thus the desired
  connectivity properties are satisfied for all nodes on the cyclic
  group. In this case, we can calculate the number of nodes in the group as
\[
C_{\max}^{\textsl{cyclic}}=\left\lfloor \frac{2\pi}{\arccos\left(1-\frac{2d^2}{R^2}\right)}\right\rfloor,
\]
which for large $N$ is bounded from below by some linear function of $\sqrt{N}$.

\underline{For the cyclic grid when $\mathrm{rem}\left(R,2d\right)
  \rightarrow d$:} Now each node is $\frac{R}{2}+d-\epsilon$ away from
the center and thus we need to select those nodes satisfying the
property of valid combination. For this it is enough that we leave an
empty angle $\phi$ such that if $AOB$ is a diameter and $AOC$ this
angle, then $CB\leq R$. By solving this for the maximum number of
points satisfying this property we get
\[
C_{\max}^{\textsl{cyclic}}=\left\lfloor \frac{2\pi}{\arccos\left(\frac{R^2}{2\left(\frac{R}{2}+d\right)^2}-1\right)}\right\rfloor,
\]
which for large $N$ is bounded from below by some linear function of $\sqrt[4]{N}$.

\begin{figure}[tb]
\centering
\includegraphics[width=0.6\columnwidth]{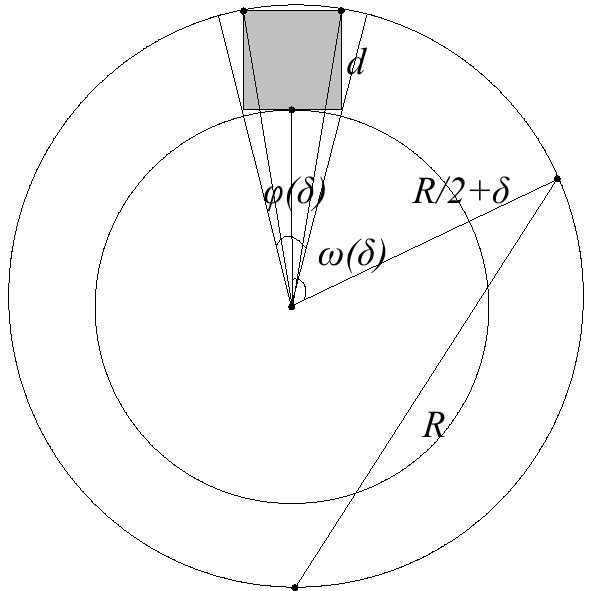} 
\caption{Sketch for the proof of Theorem \ref{th:lower}.}
\label{fig:squarecalc}
\end{figure}

\underline{For the square grid:} we construct a ring around the circle
of $\frac{R}{2}$ radius. The width of the ring is $\delta$ wide enough
to fit a whole squre of dimensions $d \times d$. Such a square is
bound to contain exactly one node of the grid. Using Figure
\ref{fig:squarecalc}, and the triangles relative to the small square,
we calculate $\delta$ as
\[
\delta=\frac{\sqrt{R^2+d(5d+4R)}-R}{2}.
\]
Thus we can show that $d\leq\delta\leq \frac{3d}{2}$. If we use
the largest possible value that guarantees that the ring contains one node
at each step, namely $\delta=1.5d$, we can compute the angle that
contains at least one node, which is of the order of $d$:
\[
\phi(\delta)=\arcsin\left(\frac{\sqrt{2}d}{\sqrt{d^2+R^2}}\right).
\]
Finally, we compute the angle which should be left empty in the valid
combination such that any node outside this angle is reachable by the
most distant node (the one at the bottom).
\[
\omega(\delta)=\arcsin\left(\frac{R}{\sqrt{2}(R+1.5d)}\right).
\]
This angle is of the order of $\sqrt{d}$. Finally an achievable
combination is obtained if we alternate $\phi$ and $\omega$ until we
fill the circle.
\[
C_{\max}^{\textsl{square}}=\left\lfloor \frac{2\pi}{\phi(\delta)+\omega(\delta)}\right\rfloor,
\]
which is bounded from below by a linear function of
$\sqrt[4]{N}$. Note that the sparseness of the combination is due to
$\omega(\delta)$ and a possible reasoning is that the bound is
constructed to cover all the cases, thus also the case that the
uncomfortable positioning of nodes matches the second case of the
cyclic grid above.
\end{proof}

In \cite{alon}, relative results on convex polygons in constrained sets guarantee the existence of convex polygons of size $\Omega(\sqrt[4]{N})$ when the $N$ thrown nodes are kept seperated by some distance.


\section{Stochastic analysis}
\label{sec:stochastic}

Assume that the locations of the nodes are determined by a Poisson
point process with density $\lambda$. The connectivity properties of
this model are well studied in the literature (see
e.g.~\cite{meester,DousseThiranHassler:02}). In our work, we assume
that the network is percolating, i.e. the nodes are dense enough to
ensure multihop communications. As in the deterministic case, we
assume that a relay is located at the origin. For a Poisson point
process this assumption does not change the distribution of the other
points. The main result is an upper bound in probability for the
maximum coding number.

\begin{theorem}
  In a random network determined by a Poisson point process with
  density $\lambda$, the maximum coding number corresponding to combinations having the relay at the origin satisfies
  \[
  \lim_{\lambda\to\infty}\pr{ C_{\max}
    (\lambda)=O(\lambda^{1/2+\epsilon})}=1,
  \]
  for any $\epsilon>0$.
\end{theorem}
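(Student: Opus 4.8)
\emph{Proof plan.} Since, as noted in the text, placing the relay at the origin does not change the law of the remaining Poisson points, I would work with the plain process $\Phi$ of intensity $\lambda$, let $B$ denote the disk of radius $R$ about the origin, and set $\mathcal V=\Phi\cap B$ (the valid nodes), $N=|\mathcal V|$, $\mu\doteq\mean{N}=\lambda\pi R^2$. By the Remarks above, a valid combination $\mathcal C$ of size $C=|\mathcal C|\ge3$ involves at least $C$ distinct source-or-destination nodes (already its $C$ destinations are distinct), and by Lemma~\ref{lemma:convex} these nodes are the vertices of a convex polygon inscribed in $B$. Hence, whenever $C_{\max}\ge3$, the set $\mathcal V$ contains at least $C_{\max}$ points in convex position. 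Writing $M$ for the largest number of points of $\mathcal V$ in convex position, we get $C_{\max}\le\max(2,M)$, so it suffices to show $\pr{M\ge\lambda^{1/2+\epsilon}}\to0$ for each $\epsilon>0$.

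Because every subset of a set in convex position is again in convex position, the event $\{M\ge k\}$ forces $\mathcal V$ to contain at least one $k$-element subset in convex position, so by Markov's inequality $\pr{M\ge k}$ is at most the expected number of such subsets. By the multivariate Mecke formula for Poisson processes this expectation equals $\frac{\lambda^k}{k!}\int_{B^k}\indic{x_1,\dots,x_k\ \text{in convex position}}\,dx_1\cdots dx_k=\frac{\mu^k}{k!}\,p_k$, where $p_k$ is the probability that $k$ points drawn independently and uniformly in a disk are in convex position. Thus $\pr{M\ge k}\le\mu^k p_k/k!$; note that the randomness of $N$ needs no separate treatment, the Mecke identity already averaging over all of $\Phi$.

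The quantitative heart of the argument is the super-exponential decay of $p_k$: by the classical analysis of convex position for uniform points in a disk (Valtr; cf.\ the convex-polygon estimates underlying \cite{alon}) there is an absolute constant $c_0$ with $p_k\le(c_0/k)^{k}$ for all large $k$ --- the true order is of the form $(c_0/k)^{2k}$, but the weaker bound already suffices. Using $k!\ge(k/e)^k$ this gives $\pr{M\ge k}\le\bigl(c_0e\,\mu/k^{2}\bigr)^{k}$, and with $\epsilon>0$ fixed and $k=\lceil\lambda^{1/2+\epsilon}\rceil$ the base $c_0e\,\mu/k^{2}$ is of order $\lambda^{-2\epsilon}\to0$ (recall $\mu=\Theta(\lambda)$), so $\pr{M\ge k}\to0$ and therefore $C_{\max}=O(\lambda^{1/2+\epsilon})$ with probability tending to one.

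I expect the estimate on $p_k$ to be the main obstacle. A self-contained super-exponential bound can be obtained by decomposing a convex polygon into its upper and lower $x$-monotone chains and taking a union bound over the $\le2^{k}$ assignments of points to chains, reducing matters to bounding the probability that a prescribed number of uniform points form a monotone convex chain; carrying the constants through is the delicate step, and the slack there --- as opposed to Valtr's exact formula, which would yield the sharp exponent $1/3$ --- is precisely why one is content with $1/2+\epsilon$. A fully elementary fallback avoids $p_k$ altogether: since a convex $n$-gon inscribed in $B$ has perimeter at most $2\pi R$, such an $n$-gon has $\Omega(n)$ vertices flanked by two edges of length $O(R/n)$, hence $\Omega(n)$ triples of points of $\mathcal V$ each confined to a disk of radius $O(R/n)$; a first-moment count of such triples already forces $n=O(\lambda^{3/5+\epsilon})$, and replacing triples by longer runs of consecutive short edges drives the exponent down toward $1/2$.
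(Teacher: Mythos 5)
Your argument is correct in outline but follows a genuinely different route from the paper. The paper's proof is a covering-plus-extreme-value argument: it tiles the disk with $O(\lambda)$ boxes of side $1/\sqrt{\lambda}$, so each box holds an independent $\mathrm{Poisson}(1)$ count whose maximum is $O(\log\lambda/\log\log\lambda)$ with high probability (citing Anderson/Kimber), and then observes that the perimeter of the convex polygon from Lemma~\ref{lemma:convex} has length at most $2\pi R$ and hence meets only $O(\sqrt{\lambda})$ boxes; multiplying the two gives $C_{\max}=O(\sqrt{\lambda}\,\log\lambda/\log\log\lambda)$, which is in fact slightly sharper than the stated $O(\lambda^{1/2+\epsilon})$ and uses no input beyond Poisson maxima. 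You instead run a first-moment/union bound over $k$-subsets in convex position via the Mecke formula, which is clean and, if you invoke the sharp Valtr--B\'ar\'any asymptotic $p_k=(\Theta(1)/k^2)^{k}$, would actually prove the stronger bound $C_{\max}=O(\lambda^{1/3+\epsilon})$ --- the known order of the largest convex polygon in a Poisson sample --- so your method buys a better exponent at the price of importing a nontrivial external estimate. That estimate is the one load-bearing step you do not fully establish: the bound $p_k\le(c_0/k)^k$ is true but is not elementary, and your chain-decomposition sketch leaves the constants unverified. Your fallback, by contrast, is essentially a quantitative refinement of the paper's own perimeter observation (runs of $j$ consecutive short edges give exponent $(j+1)/(2j+1)\to 1/2$), and I checked that the block-pigeonhole count does close; written out, it would make your proof self-contained at the stated exponent. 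In short: same geometric kernel (Lemma~\ref{lemma:convex} plus the $2\pi R$ perimeter bound), but the paper converts it into a deterministic covering number while you convert it into a convex-position probability; the paper's version is more elementary and marginally sharper in the polylog sense, yours generalizes more readily and points at the optimal $\lambda^{1/3}$ order.
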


\begin{proof}
  Cover the disk of radius $R$ around the origin by disjoint boxes of
  size $\frac{1}{\sqrt{\lambda}}\times \frac{1}{\sqrt{\lambda}}$.  The
  number of nodes inside the boxes is denoted by $N_i$,
  $i=1,\ldots,n(\lambda)$. The $N_i$ are identically and independently
  $\mathrm{Poisson}(1)$ distributed and thus there is a sequence $I_n$
  such that
  \[
  \lim_{n\to\infty}\pr{\max_{1\leq i \leq n} N_i = \mbox{$I_n$ or
      $I_n+1$}} =1,
  \] 
  where $I_n = O\left( \frac{\log n}{\log\log n}\right)$ (see
  \cite{Anderson:70,Kimber:83}). Since $n(\lambda)=O(\lambda)$,
  \begin{equation}
    \lim_{\lambda\to\infty}\pr{\max_{i=1,\ldots,n(\lambda)}
      N_i=O\left(\frac{\log \lambda}{\log\log \lambda}\right)}=1.
    \label{eq:Nlimit}
  \end{equation}

  Next consider a valid combination. By Lemma \ref{lemma:convex},
  the nodes of the combination form a convex polygon.  The perimeter
  of any convex polygon is at most $2 \pi R$ because it is located
  inside the disk of radius $R$. Since at most $O(\sqrt{\lambda})$
  boxes of size $\frac{1}{\sqrt{\lambda}}\times
  \frac{1}{\sqrt{\lambda}}$ are needed to cover the perimeter of any
  convex polygon,
  \[ C_{\max}(\lambda) \leq \max_{i=1,\ldots,n(\lambda)}N_i
  O(\sqrt{\lambda}) \quad \mbox{a.s.}
  \]
  This implies that
  \[
  \pr{C_{\max}(\lambda) \leq O(\sqrt{\lambda} g(\lambda))} \geq \pr{
    \max_{i=1,\ldots,n(\lambda)} N_i \leq O\left(g(\lambda)\right)}.
  \]
  Setting $g(\lambda)=\log \lambda /  \log\log \lambda$, applying
  equation (\ref{eq:Nlimit}) and finally noticing that $\log \lambda
  /\log\log \lambda = O\left(\lambda^\epsilon\right)$ for any
  $\epsilon>0$ completes the proof.
\end{proof}


\section{Rate efficiency of a network coding combination}
\label{sec:efficiency}

We will focus on the downlink of a valid combination of size
$C$. Without loss of generality, assume that the rate vector
$\mathbf{r}=\{r_i\}_{i=1,2,\dots,C}$ is ordered, i.e.,
$r_1<r_2<\dots<r_C$, and that the flow set is permuted accordingly so
that over the link $\left(v_0,D(f_i)\right)$ packets are transfered at a rate
$r_i$.

The data rate is computed as the number of packets of size $P$
transmitted in a virtual frame over the time needed for these
transmissions. Since an encoded packet is always transmitted at the
lowest rate decodable by all receivers and assuming max-min fair
allocation\footnote{This condition of fairness provides the best
  network coding opportunities and it is usually the balance point
  where network coding gain is computed in multiclass networks.}, we
can deduce the maximum throughput rate with network coding as
\[
r_{\textsl{NC}}(C)=\frac{CP}{\frac{P}{\min\{\mathbf{r}\}}}=Cr_1.
\]
The rate without NC would be
\[
r_{\textsl{w}}(C)=\frac{CP}{\frac{P}{r_1}+\frac{P}{r_2}+\dots+
  \frac{P}{r_C}}=C\left(\sum_{i=1}^{C}\frac{1}{r_i}\right)^{-1}=r_h,
\]
where $r_h$ is the harmonic mean of $\mathbf{r}$. Choosing any $m\leq
C$ and allowing for combinations of size $m$ at most, it is easy
to see that if the criteria for valid combinations are fulfilled for
combination of size $C$ then they are fulfilled for all
subsets. The corresponding achievable rate is 
\[
  \begin{split}
    r_m(C)&=\frac{CP}
    {\sum_{i=1}^{\left\lfloor \frac{C}{m} \right\rfloor}\frac{P}{r_{m(i-1)+1}}+\indic{\mathrm{rem}\left(C,m\right)>0}\frac{P}{r_{C-\mathrm{rem}(C,m)+1}}}\\
    &=C\left(\sum_{i=1}^{\left\lfloor \frac{C}{m}
        \right\rfloor+\indic{\mathrm{rem}\left(C,m\right)>0}}\frac{1}{r_{m(i-1)+1}}\right)^{-1},
\end{split}
\]

Next, we derive the network coding gain for the maximum combination
($C$) and for the constrained group ($m\leq C$).
\begin{equation*}
  g(C) \doteq \frac{r_{\textsl{NC}}(C)}{r_{\textsl{w}}(C)} = C \frac{r_1}{r_h}.
\end{equation*}
Note that the gain is a linear function of $C$ and depends on the
particularities of the rate vector. Also,
\begin{equation*}
\begin{split}
  g_m(C) & \doteq \frac{r_m(C)}{r_{\textsl{w}}(C)} = \\
  &=\frac{C}{r_h} \left(\sum_{i=1}^{\left\lfloor \frac{C}{m}
      \right\rfloor+\indic{\mathrm{rem}\left(C,m\right)>0}}
    \frac{1}{r_{m(i-1)+1}}\right)^{-1}\\
  & \geq \frac{C}{r_h} \left(\frac{\left\lceil
        \frac{C}{m}\right\rceil}{r_1}\right)^{-1} \geq (m-1)
  \frac{r_1}{r_h},
\end{split}
\end{equation*}
where in the first inequality we have used that $r_1$ is the minimum
rate, and in the second we have used $m-1 \leq \frac{C}{\left\lceil
    \frac{C}{m} \right\rceil} \leq m$. If we choose equal rates, then
we readily get $g(C)=C$ and $\max\left\{g_m(C)\right\}=m$ as the
maximum gain for both. Finally we can symbolize that $g(C)=\Theta(C)$
and moreover the difference $g(C)-g_m(C)=O(C-m)$. 
Therefore, the efficiency loss is of the order of $\sqrt{N}-m$ which means that for carefully chosen $m$, the loss can be kept small.


\begin{figure}[t]
\centering
\includegraphics[width=0.49\columnwidth]{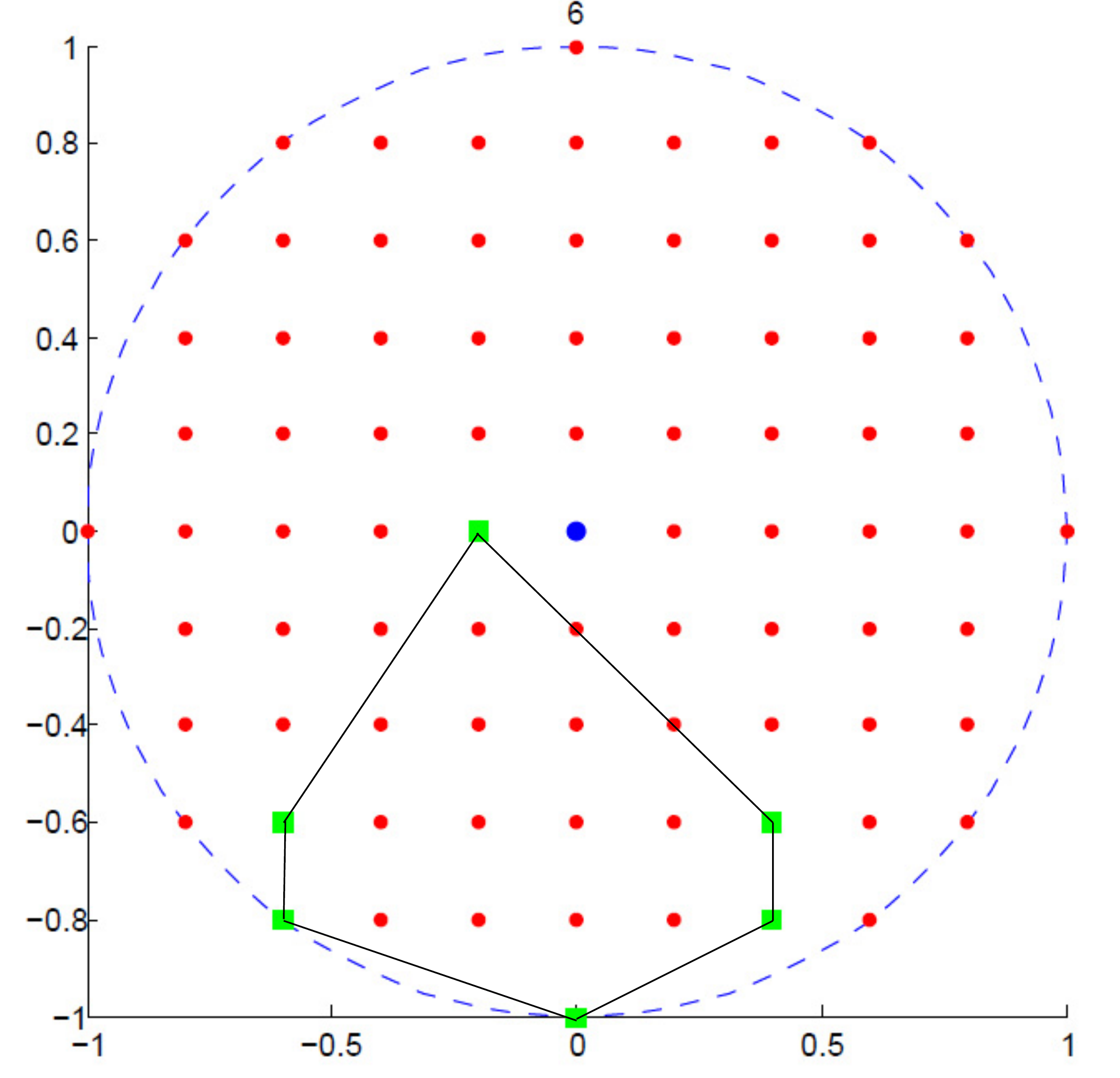}
\includegraphics[width=0.49\columnwidth]{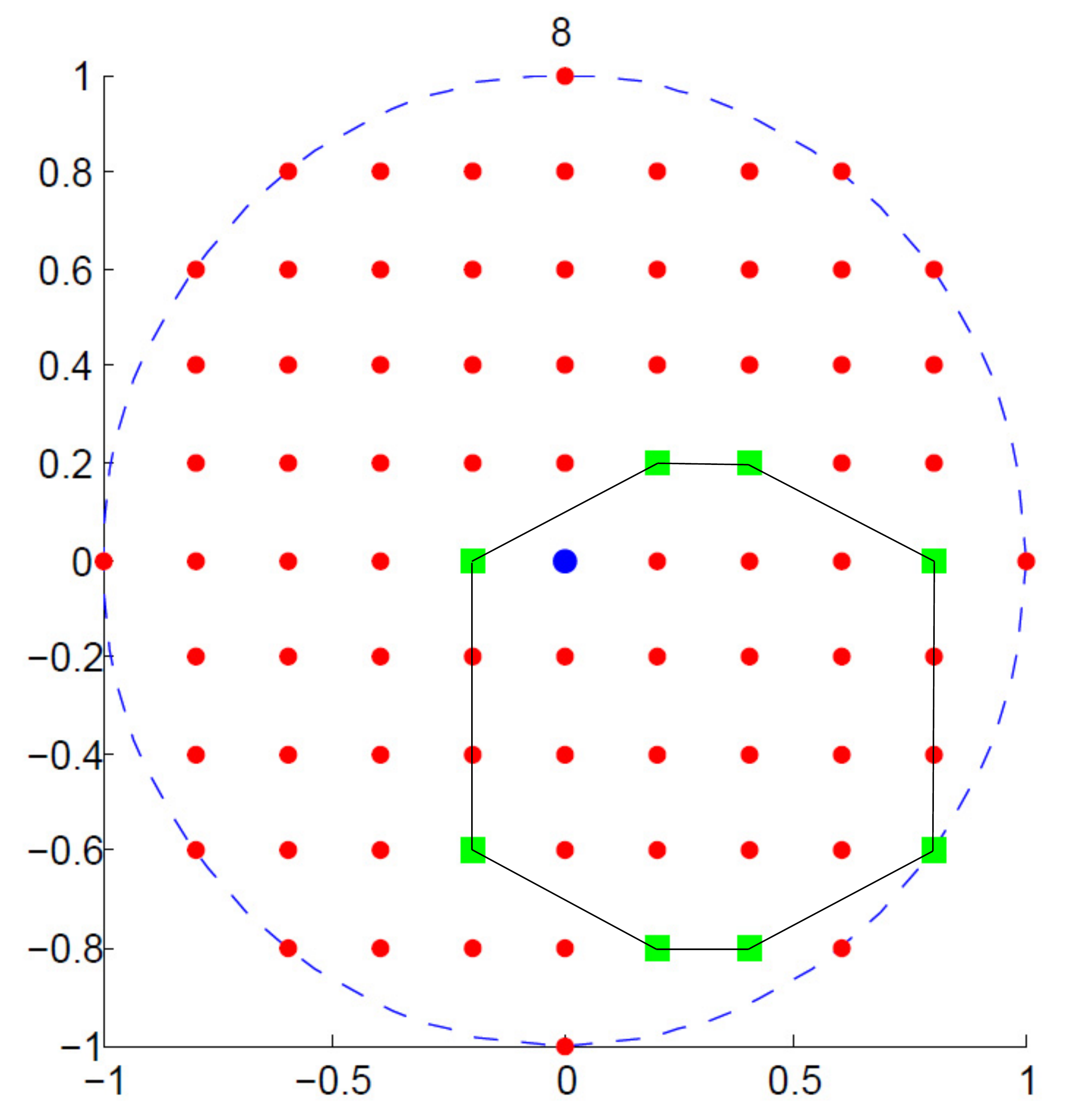}
\caption{Coding combination examples for $C_{\max}=6$ and $C_{\max}=8$
  in a square grid with $N=81$.}
\label{fig:grid_examples}
\end{figure}

\begin{figure}[t]
\centering
\includegraphics[width=0.9\columnwidth]{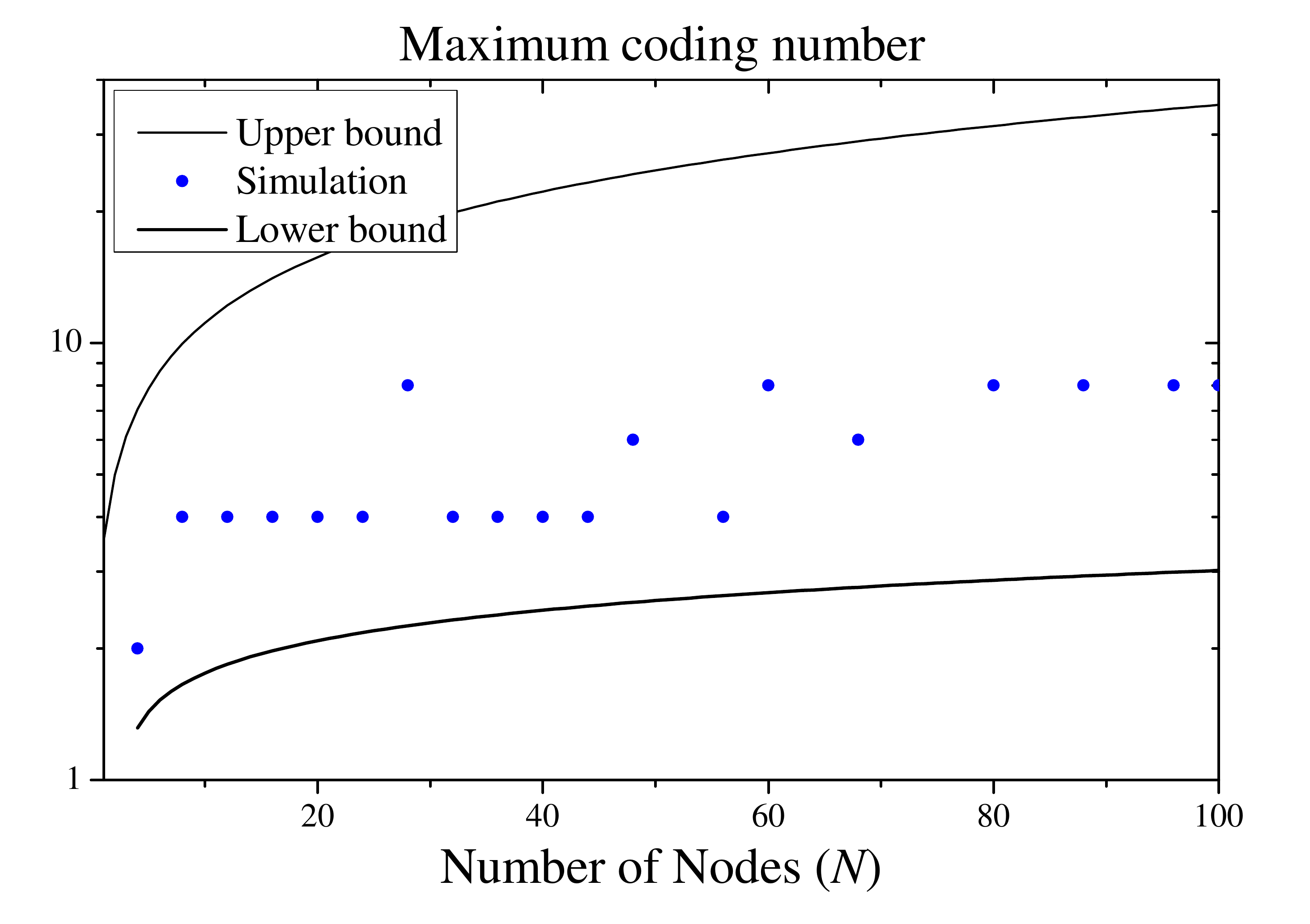}
\caption{Simulation of maximum coding number of a square grid
  inside a disk with $N$ nodes. $C_{\max}$ is the upper bound and
  $C_{\min}$ is the achievable lower bound provided in the theoretical
  analysis.}
\label{fig:grid_maxcomb}
\end{figure}

\section{Numerical results}
\label{sec:numerical}

In this section we present some simulation results that provide
further evidence and insight for our work. For simulation purposes we
consider a disk of radius $R=1$ and a node $v_0$ serving as a relay
situated at the center of the disk. Initially we consider a square
grid of nodes over this disk and we investigate the maximum coding
number, i.e., a set of nodes that satisfies the constraints of section
\ref{sec:model}. Then, the scenario of uniformly random thrown nodes
is considered.

\subsection{Experiments with square grids}

\begin{figure}[tb]
\centering
\includegraphics[width=0.46\columnwidth]{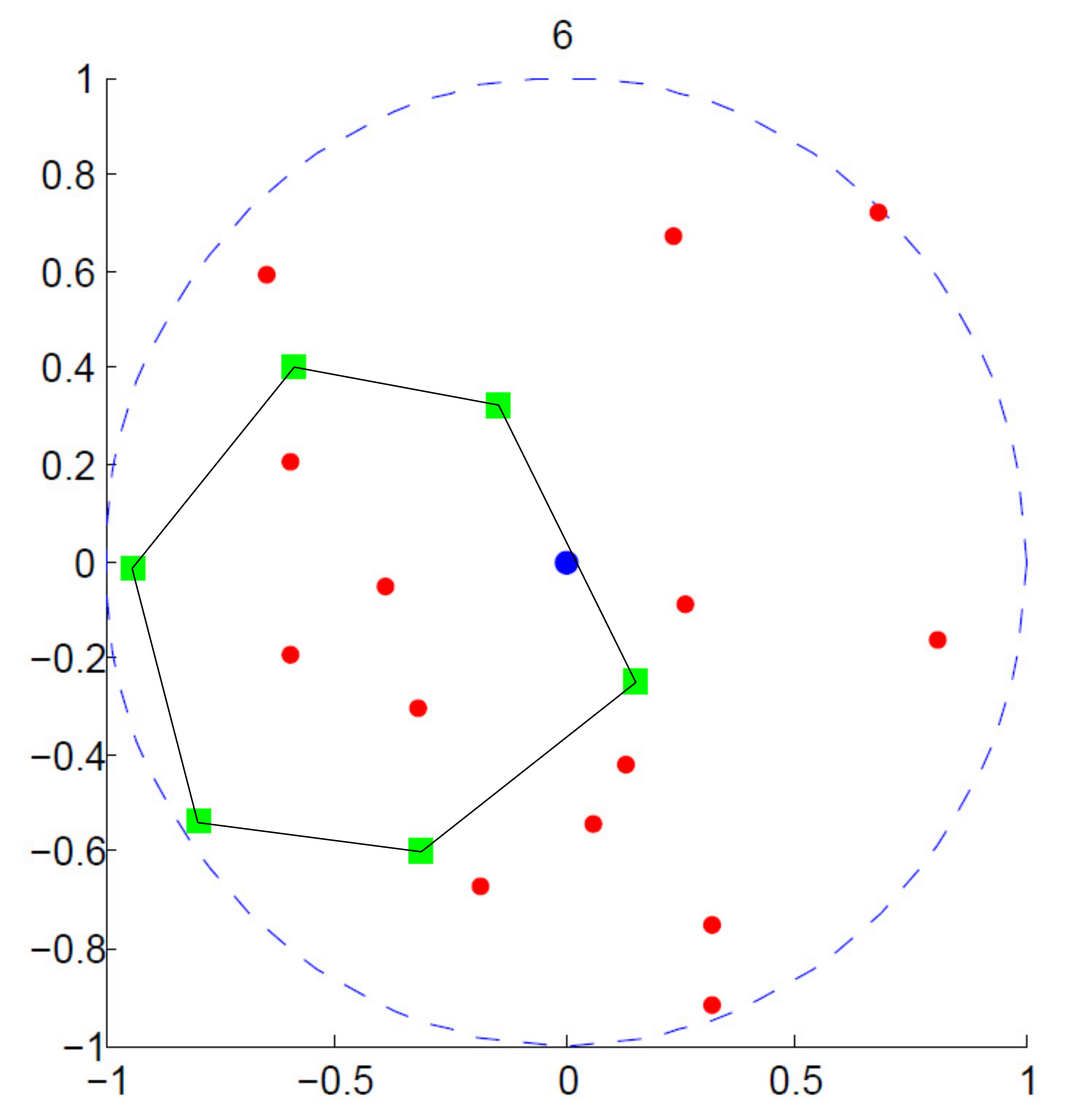}
\includegraphics[width=0.46\columnwidth]{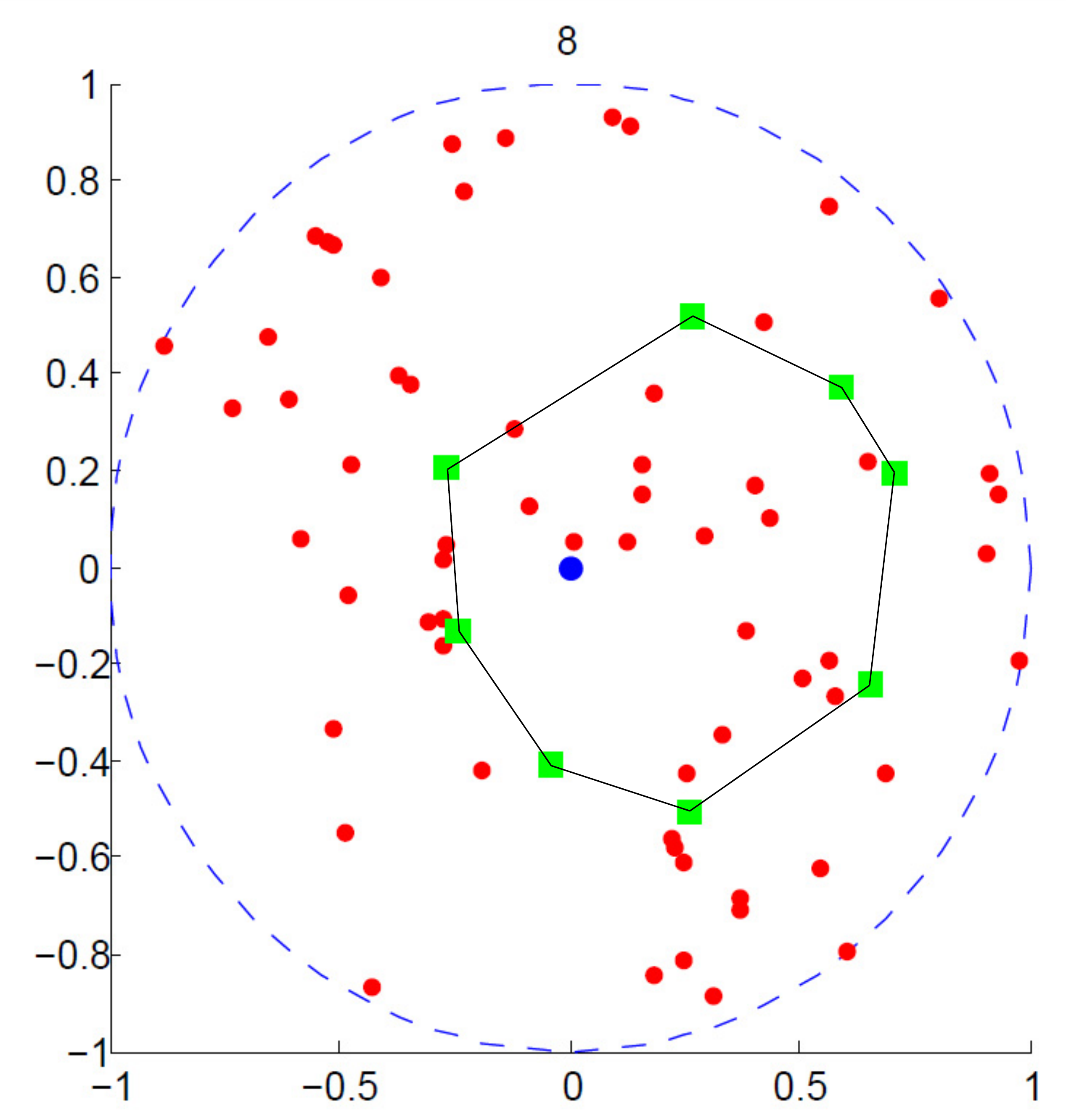}
\caption{Maximum coding combination examples for $C_{\max}=6$ and $C_{\max}=8$ in a uniformly thrown network with $N=21$ and $N=70$ respectively.}
\label{fig:random_examples}
\end{figure}

\begin{figure}[tb]
\centering
\includegraphics[width=0.95\columnwidth]{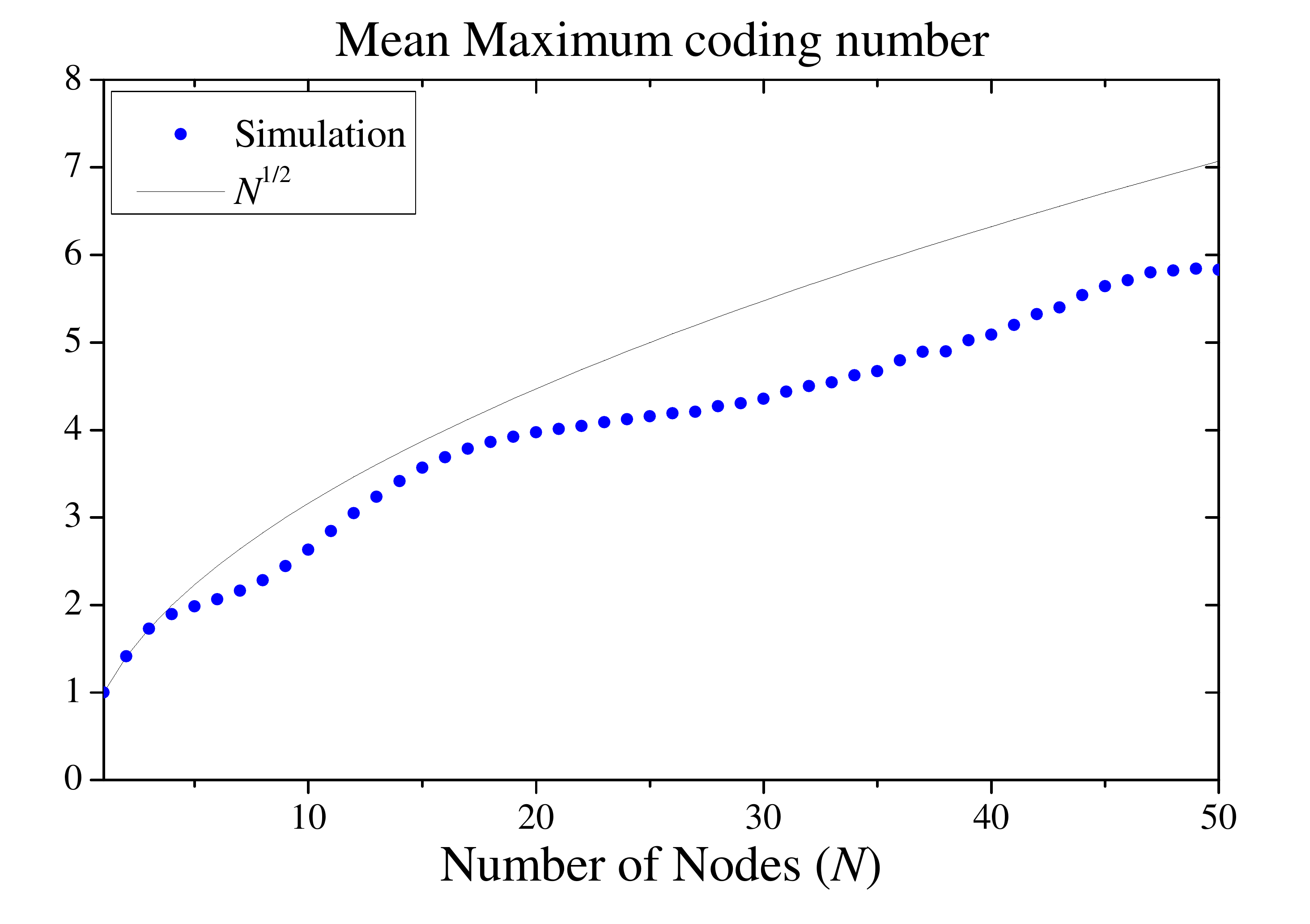}
\caption{Mean maximum coding number in a network of $N$ uniformly thrown nodes.}
\label{fig:random_maxcomb}
\end{figure}

Figure \ref{fig:grid_examples} showcases examples of combination size
$C=6$ and $C=8$ (the maximum is $8$ in this case). During these
experiments we identified an interesting property. We noticed that the
maximum coding number depends only on the number of nodes
inside the disk and not on the actual $d$ used. In particular, for all
$\{d: N \text{ is fixed}\}$, $C_{\max}$ is constant.  In Figure
\ref{fig:grid_maxcomb} we present the results for the square grid. The
actual values seem to be closer to the lower bound than to the upper,
leading to an order closer to $\sqrt[4]{N}$ (notice the logarithmic
scale of the figure). Nevertheless, the oscillating effect due to the 
interplay between the radius and the number of nodes is evident.

\subsection{Experiments with randomly positioned nodes}

Next, we throw $N$ nodes uniformly inside the disk of radius $R=1$.
Examples of maximum coding number are showcased in Figure
\ref{fig:random_examples}. It is noted from these examples that large
combinations tend to appear in a $\delta$--ring form where the inner
side of the ring is a disk of radius $\frac{R}{2}$ and the outer side
is a disk of radius $\frac{R}{2}+\delta$.

In Figure \ref{fig:random_maxcomb}, we present the mean maximum
coding number, for different number of nodes $N$. In each sample,
the maximum
coding number is calculated and the mean is obtained by
averaging over $1000$ random samples. The
$O(\sqrt{N})$ behavior is depicted in this picture.

Figure \ref{fig:random_prob} shows the probability of existence of at
least one coding combination of size $C$ in a network of $N$
uniformly thrown nodes. For example, the maximum component size for
$20\leq N\leq 50$ is either 4 or 6 in the majority of cases. The simulation results show that
in real networks of moderate size the usual combination size is quite
small. The multiplicative constant seems to be close to 1. In this context, the focus should be on developing efficient
algorithms that opportunistically exploit local network coding over a
wide span of topologies using small XOR combinations rather than
attempting to solve complex combinatorial problems in order to find
the best combinations available.

\begin{figure}[tb]
\centering
\includegraphics[width=0.95\columnwidth]{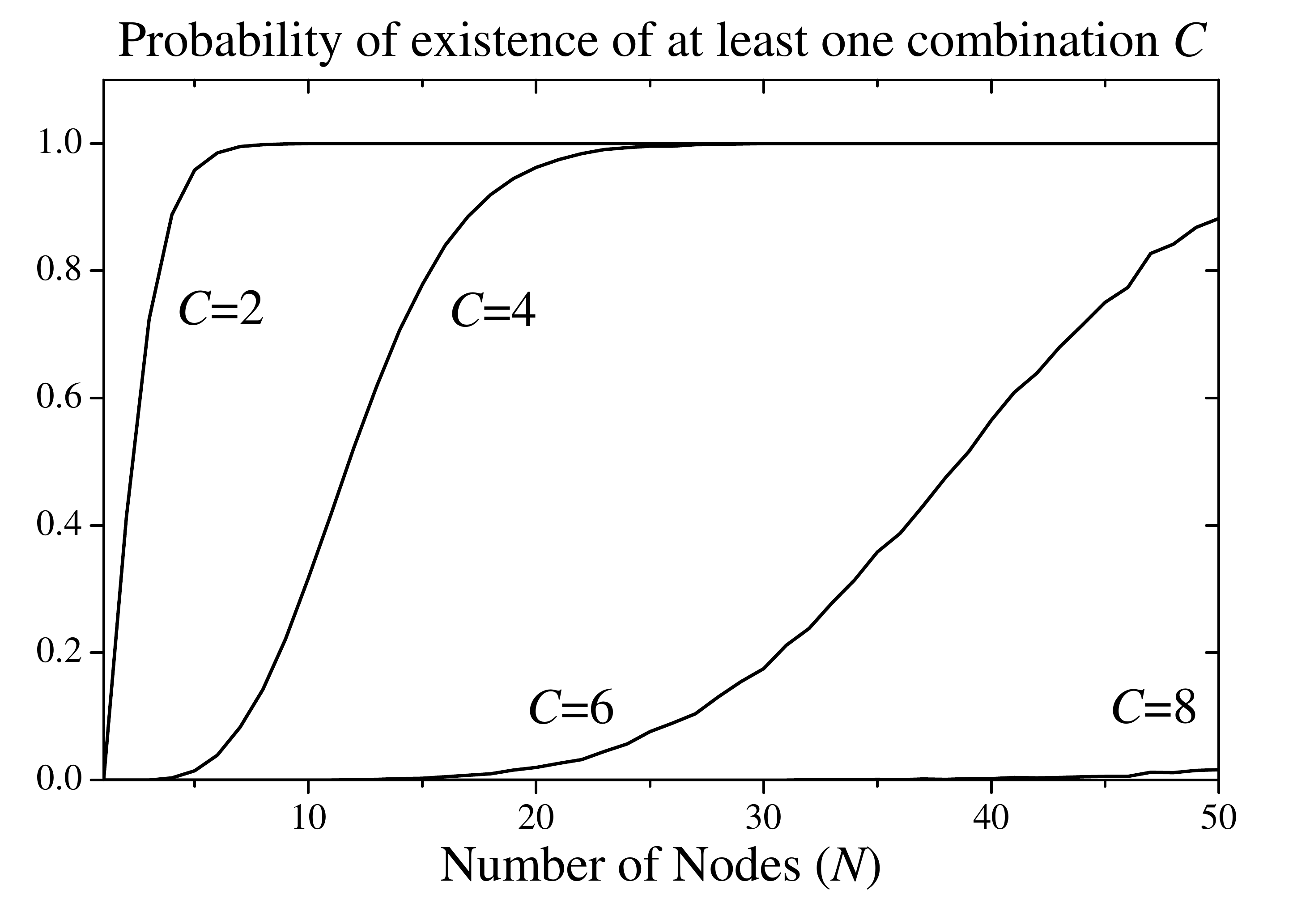}
\caption{Probability of existence of at least one coding combination
  of size $C$ in a network of $N$ uniformly thrown nodes.}
\label{fig:random_prob}
\end{figure}

\subsection{A realistic scenario}
Next, we set up a realistic experiment to be run in simulation environment. A relay node is positioned at the origin, willing to forward any traffic required and apply NC if beneficial. Then, we throw $\frac{N}{2}$ pairs of nodes randomly inside the disk defined by the relay and the communication distance $R$; note that all nodes are valid nodes. Each pair constitutes a symmetric flow. Each flow may be valid or invalid depending on the distance between the two nodes, see definition in section \ref{sec:model}. Whenever the flow is invalid, the nodes communicate directly by exchanging two packets over two slots (one for each direction). If the flow is valid, the relay is utilized to form a 2-hop linear network. Again, 2 packets are uploaded towards the relay using two slots while the downlink part is left for the end of the frame. Finally, the relay has collected a number of packets which may be combined in several ways using NC. To identify the minimum number of slots required to transmit those packets to the intended receivers, we solve the problem of minimum clique partition with the constraint of using cliques of size up to $\frac{m}{2}$ (equivalently, combining up to $m$ packets together). In the above we have assumed that all links have equal transmission rates and that the arrival rates of the flows are all equal (symmetric fair point of operation). The network coding gain is calculated by dividing the number of slots used without NC by the number of slots using NC.

Figure \ref{fig:realistic} depicts results from simulated random experiments. Evidently, it is enough to combine up to two packets per time in order to enjoy approximately the maximum NC gain. This example supports the intuition that in practice the network coding gain from large combinations is expected to be negligible.

\begin{figure}[t]
\centering
\includegraphics[width=0.9\columnwidth]{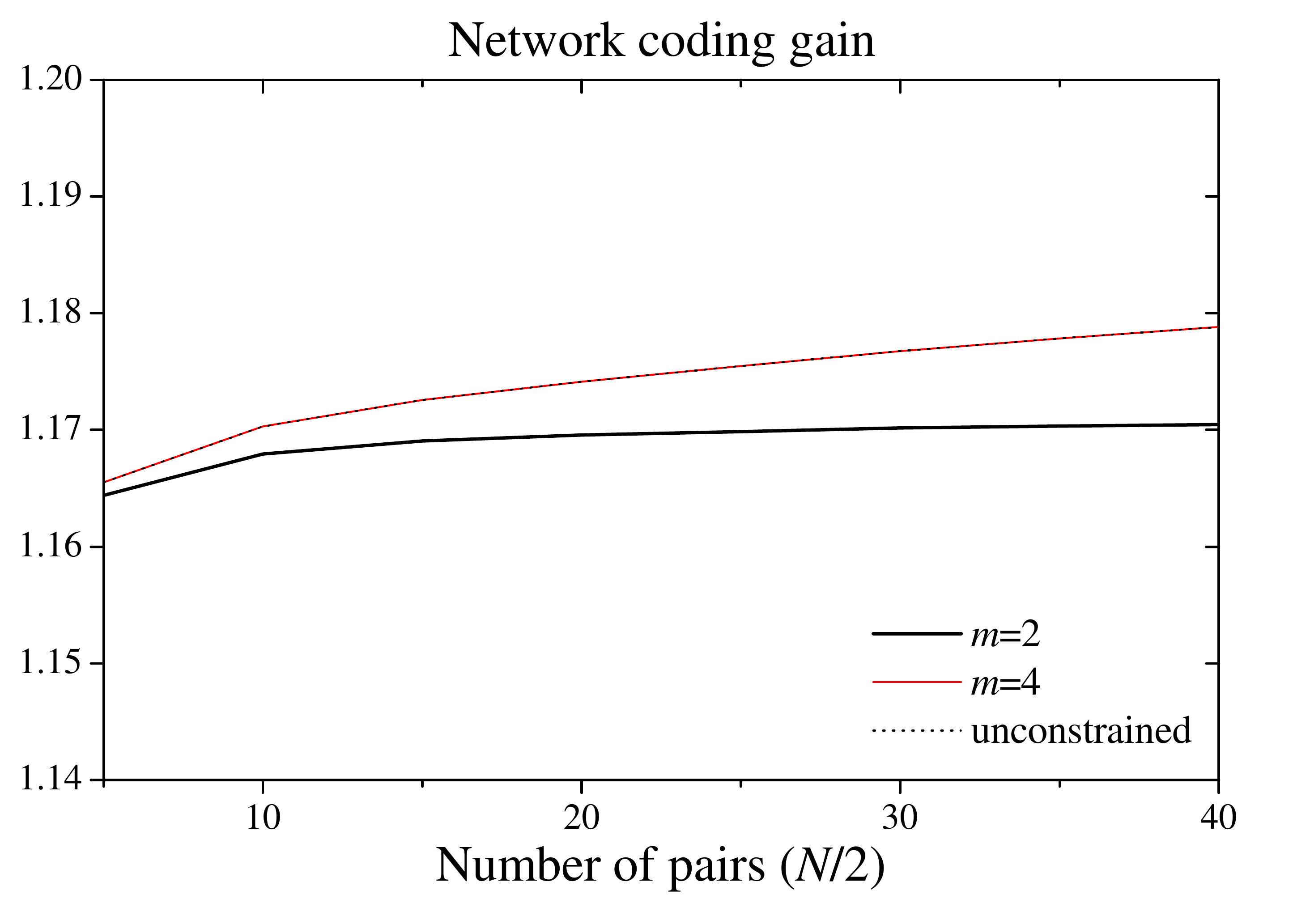}
\caption{A realistic scenario. Pairs of nodes forming flows are randomly thrown inside the disk $(0,r)$. The relay node (situated at the origin) assists the flows that cannot communicate directly. We restrict NC combinations to size $m$ where $m=2,4,\infty$.}
\label{fig:realistic}
\end{figure}
\section{Conclusion}
\label{sec:conclusion}
By considering the Boolean connectivity model and applying the basic
properties required for correct decoding, we showed that for the local
network coding there are certain geometric constraints bounding the
maximum number of packets that can be encoded together. Particularly,
due to the convexity of any valid combination, the sizes of
combinations are at most order of $\sqrt{N}$ for all studied network
topologies.

The fact that the number of packets is limited gives rise to
approximate algorithms for local coding.  Instead of attempting to solve
the hard problem of calculating all possible coding combinations, we
showed that an algorithm considering smaller combinations does not
lose too much.

%
\end{document}